\newtheorem{theorem}{Theorem}[section]
\newtheorem{lemma}[theorem]{Lemma}
\newtheorem{corollary}[theorem]{Corollary}
\newcommand{\kom}[1]{}
\newcommand{\proofend}{\rule{1em}{0in} \hspace*{\fill}$\square$\vspace{1ex}\par}
\newenvironment{proof}{\par\noindent{\em Proof\/}.~}{\proofend}
\newcommand{\set}[2]{\{\,#1\mid#2\,\}}
\title{The Quantum Query Complexity \\ of Algebraic Properties
}
\author{Sebastian D\"orn \\
\normalsize Institut f\"{u}r Theoretische Informatik\\
\normalsize Universit\"{a}t Ulm\\
\normalsize 89069 Ulm, Germany
\and
Thomas Thierauf\\
\normalsize Fak.\ Elektronik und Informatik\\
\normalsize  HTW Aalen\\
\normalsize 73430 Aalen, Germany
\\
\and
\normalsize ~~~~~~~~\{sebastian.doern,thomas.thierauf\}@uni-ulm.de~~~~~~~~
}
\date{}
\begin{document}
\maketitle

\begin{abstract}
We present quantum query complexity bounds for testing algebraic properties.
For a set $S$ and a binary operation on~$S$,
we consider the decision problem whether~$S$ is a semigroup or has an identity element.
If~$S$ is a monoid, we want to decide whether~$S$ is a group.

We present quantum algorithms for these problems 
that improve the best known classical complexity bounds.
In particular,
we give the first application of the new quantum random walk technique by  
Magniez, Nayak, Roland, and Santha~\cite{MNRS07}
that improves the previous bounds by Ambainis~\cite{Amb04} and Szegedy~\cite{Sze04}.
We also present several lower bounds for testing algebraic properties.
\end{abstract}


\section{Introduction}\label{s:intro}

Quantum algorithms have the potential to demonstrate 
that for some problems quantum computation is more efficient than classical computation. 
A goal of quantum computing is to determine whether quantum computers are faster than classical computers. 

In search problems, the access to the input is done via an oracle. 
This motivates the definition of the query complexity, which measures the number of accesses to the oracle. 
Here we study the quantum query complexity, which is the number of quantum queries to the oracle.
For some problems the quantum query complexity can be exponentially smaller than the classical one;
an example is the Simon algorithm~\cite{Sim94}. 

Quantum query algorithms have been presented for several problems, 
see~\cite{BDHHMSW01,Amb04,DHHM04,MN05,MSS05,BS06,Doe07a,Doe07b}.
These algorithms use search techniques like Grover search~\cite{Gro96}, amplitude amplification~\cite{BHMT00} and quantum random walk~\cite{Amb04,Sze04}.

In this paper we study the quantum query complexity for testing algebraic properties. 
Our input is a  multiplication table for a set~$S$ of size $n \times n$.
In Section~\ref{s:semigroup} we consider the {\em  semigroup problem\/},
that is, whether the operation on~$S$ is associative.
Rajagopalan and Schulman~\cite{RS00} developed a randomized algorithm 
for this problem that runs in time $O(n^2)$.
As an additional parameter, 
we consider the binary operation $\circ: S \times S \rightarrow M$, where $M \subseteq S$.
We construct a quantum algorithm for this problem whose query complexity is $O(n^{5/4})$, 
if the size of $M$ is constant. 
Our algorithm is the first application of the new quantum random walk search scheme 
by Magniez, Nayak, Roland, and Santha~\cite{MNRS07}.
With the quantum random walk of Ambainis~\cite{Amb04} and Szegedy~\cite{Sze04},
the query complexity of our algorithm would not improve the obvious Grover search algorithm for this problem.
We show  a quantum query lower bound for the semigroup  problem of $\Omega(n)$ in Section~\ref{s:lowerbound}. 

In Section~\ref{s:group} we consider the {\em group problem},
that is, whether the monoid~$M$ given by its multiplication table is a group.
We present a  randomized algorithm that solves 
the problem with $O(n^{\frac{3}{2}})$ classical queries to the multiplication table.
This improves the naive $O(n^2)$ algorithm that searches for an inverse in the 
multiplication table for every element.
Then we show that on a  quantum computer  
the query complexity can be improved to $\widetilde{O}(n^{\frac{11}{14}})$,
where the $\widetilde{O}$-notation hides a logarithmic factor.

In Section~\ref{s:lowerbound},
we show linear lower bounds for the semigroup problem and the identity problem.
In the latter problem we have given a multiplication table of a set~$S$
and have to decide
whether~$S$ has an identity element.
As an upper bound,
the identity problem can be solved with linearly many quantum queries,
which matches the lower bound.
Finally we show linear lower bounds for the quasigroup and the loop problem,
where one has to decide whether a multiplication table is a quasi group or a loop, respectively.


\section{Preliminaries}\label{s:prel}

\subsection{Quantum Query Model}
In the query model, 
the input $x_1,\ldots , x_N$ is contained in a black box or oracle 
and can be accessed by queries to the black box. 
As a query we give~$i$ as input to the black box and the black box outputs~$x_i$. 
The goal is to compute a Boolean function $f:\{0, 1\}^N \rightarrow \{0, 1\}$ 
on the input bits $x=(x_1, \ldots, x_N)$ minimizing the number of queries. 
The classical version of this model is known as decision tree.

The quantum query model was explicitly introduced by Beals {\it et al.\/}~\cite{BBCMW01}. 
In this model we pay for accessing the oracle, 
but unlike the classical case, 
we use the power of quantum parallelism to make queries in superposition.
The state of the computation is represented by $\left|i, b, z\right\rangle$, 
where~$i$ is the query register,
$b$ is the answer register, 
and $z$ is the working register. 

A 
quantum computation with $k$ queries is a sequence of unitary transformations
\[U_0 \rightarrow O_x \rightarrow  U_1 \rightarrow  O_x \rightarrow  \ldots 
\rightarrow  U_{k-1} \rightarrow  O_x \rightarrow  U_k,\]
where each $U_j$ is a unitary transformation 
that does not depend on the input~$x$, 
and $O_x$ are query (oracle) transformations. 
The oracle transformation $O_x$ can be defined as
$O_x :\left|i, b, z\right\rangle \rightarrow  \left|i, b \oplus x_i, z\right\rangle$.

The computation consists of the following three steps: 
\begin{enumerate}
\item
Go into the initial state $\left|0 \right\rangle$.
\item
Apply the transformation $U_T O_x \cdots O_x U_0$.
\item
Measure the final state.
\end{enumerate}
The result of the computation is the rightmost bit 
of the state obtained by the measurement.

The quantum computation determines~$f$ with bounded error, 
if for every $x$, 
the probability that the result of the computation
equals $f(x_1, \ldots , x_N)$ is at least $1-\epsilon$, for some fixed $\epsilon < 1/2$.
In the query model of computation each query adds one to the query complexity of an algorithm, 
but all other computations are free. 


\subsection{Tools for Quantum Algorithms}

For the basic notation on quantum computing, 
we refer the reader to the textbook by Nielsen and Chuang [NC03].
Here, we give three tools for the construction of our quantum  algorithms.


\paragraph{Quantum Search.}

A search problem is a subset $P \subseteq \{1, \ldots, N\}$ of the search space~$\{1, \ldots, N\}$.
With~$P$ we associate its characteristic function \linebreak $f_P: \{1, \ldots, N\} \rightarrow \{0,1\}$ with
\[
f_P(x)=
\begin{cases}
1, & \text{if } x \in P,\\
0, & \text{otherwise}.
\end{cases}
\]
Any $x \in P$ is called a solution to the search problem.
Let $k = |P|$ be the number of solutions of~$P$.

\begin{theorem}{\rm \cite{Gro96,BBHT98}}\label{Grover}
For $k > 0$,
the expected quantum query complexity for finding one solution of $P$ is $O(\sqrt{N/k})$, 
and for finding all solutions, it is $O(\sqrt{kN})$. 
Futhermore, 
whether $k>0$ can be decided  in $O(\sqrt{N})$ quantum queries to $f_P$. 
\end{theorem}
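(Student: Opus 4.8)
The plan is to prove all three claims from the single Grover iteration together with the exponential-search-over-iteration-count technique of~\cite{BBHT98}, which removes the need to know $k$ in advance.

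First I would analyze one Grover iteration $G$, the composition of the oracle reflection about the solution set $P$ with the diffusion reflection about the uniform superposition $\left|\psi\right\rangle = \frac{1}{\sqrt N}\sum_{x=1}^{N}\left|x\right\rangle$. Restricted to the two-dimensional real subspace spanned by the normalized uniform superposition over $P$ and over its complement, $G$ acts as a rotation by angle $2\theta$, where $\sin^2\theta = k/N$. Hence, starting from $\left|\psi\right\rangle$ and applying $G$ exactly $t$ times, the probability of measuring an element of $P$ equals $\sin^2\!\big((2t+1)\theta\big)$. Taking $t = \big\lfloor \pi/(4\theta)\big\rfloor = O(\sqrt{N/k})$ drives this probability close to $1$, which already gives one solution with $O(\sqrt{N/k})$ queries \emph{when $k$ is known}.

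The main obstacle is that $k$ is generally unknown, so the optimal $t$ cannot be fixed directly. Here I would invoke the technique of~\cite{BBHT98}: keep a parameter $m$, initialized to $1$ and multiplied by a constant $\lambda > 1$ after each failed round, capped at $\sqrt N$; in each round pick the iteration count $j$ uniformly at random from $\{0,\dots,\lceil m\rceil-1\}$, run $j$ Grover steps on $\left|\psi\right\rangle$, and measure. The key estimate is that once $m$ has grown to $\Theta(\sqrt{N/k})$, a uniformly random $j$ in this range lands in a good rotation window with constant probability, so each subsequent round succeeds with constant probability. Since the per-round cost grows geometrically, the total expected query count is dominated by the final round and sums to $O(\sqrt{N/k})$; the cap $m\le\sqrt N$ bounds the worst case $k=1$ by $O(\sqrt N)$.

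The decision claim follows at once: run the unknown-$k$ search but abort after $C\sqrt N$ queries for a suitable constant $C$, and report ``$k>0$'' iff a solution was observed. If $k\ge 1$ the search returns a solution within $O(\sqrt N)$ queries with probability at least $1-\epsilon$, while if $k=0$ no solution can ever be returned, giving controlled one-sided error. Finally, to find \emph{all} solutions I would search a shrinking target: after $i$ solutions have been found, redefine the characteristic function to be $0$ on the found elements (each evaluation still costs one query to $f_P$) and search among the remaining $k-i$ at cost $O(\sqrt{N/(k-i)})$. Summing,
\[
\sum_{i=0}^{k-1} O\!\left(\sqrt{\tfrac{N}{k-i}}\right)
 = O\!\left(\sqrt N \sum_{r=1}^{k}\tfrac{1}{\sqrt r}\right)
 = O(\sqrt{kN}),
\]
using $\sum_{r=1}^{k} r^{-1/2} = \Theta(\sqrt k)$; amplifying each individual search to success probability $1-O(1/k)$ keeps the overall error bounded while inflating the query count only by a constant factor.
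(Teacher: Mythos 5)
The paper does not prove this theorem at all: it is imported as a black box from the cited references \cite{Gro96,BBHT98}, so the only meaningful comparison is with the proofs in those sources. Your argument is exactly the standard one (the two-dimensional rotation analysis of the Grover iterate with $\sin^2\theta = k/N$, the exponential growth of the iteration bound from \cite{BBHT98} to handle unknown~$k$, a cutoff at $O(\sqrt{N})$ queries for the decision version, and sequential search with exclusion of found elements summing to $O(\sqrt{kN})$), and it is correct; the only point left implicit is that after the $k$-th solution one further run of the decision procedure, costing $O(\sqrt{N}) = O(\sqrt{kN})$ queries, is needed to detect that no solutions remain.
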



\paragraph{Amplitude Amplification.}

Let $\mathcal A$ be an algorithm for a problem with small success probability at least $\epsilon$.
Classically, we need $\Theta(1/\epsilon)$ repetitions of $\mathcal A$ to increase 
its success probability from $\epsilon$ to a constant, for example 2/3.  
The corresponding technique in the quantum case
is called amplitude amplification.

\begin{theorem}{\rm \cite{BHMT00}}\index{Amplitude amplification}
Let $\mathcal A$ be a quantum algorithm with one-sided error and success probability at
least $\epsilon$. 
Then there is a quantum algorithm $\mathcal B$ that solves $\mathcal A$ with success
probability 2/3 by $O(\frac{1}{\sqrt{\epsilon}})$ invocations of $\mathcal A$.
\end{theorem}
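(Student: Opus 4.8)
The plan is to reduce everything to a single invariant two-dimensional subspace and then run a Grover-style rotation, the only new ingredient being that the starting amplitude in the good subspace is $\sqrt{\epsilon}$ rather than $1/\sqrt{N}$. First I would set $|\psi\rangle = \mathcal A\,|0\rangle$, the state produced by $\mathcal A$, and let $P$ be the projection onto the accepting (``good'') basis states. Writing $|\psi\rangle = \sin\theta\,|\psi_g\rangle + \cos\theta\,|\psi_b\rangle$, with $|\psi_g\rangle$ and $|\psi_b\rangle$ the normalized good and bad parts, the success probability of $\mathcal A$ equals $\sin^2\theta$, so the hypothesis gives $\sin^2\theta \ge \epsilon$, that is $\theta \ge \sqrt{\epsilon}$ (using $\arcsin t \ge t$).

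Next I would introduce the amplification operator $Q = -\,\mathcal A\, S_0\, \mathcal A^{-1}\, S_g$, where $S_g$ flips the sign of the good states (one call to a phase oracle recognizing accepting states) and $S_0$ flips the sign of $|0\rangle$. The central computation is that $\mathrm{span}\{|\psi_g\rangle,|\psi_b\rangle\}$ is invariant under $Q$ and that $Q$ restricted to it is a rotation by $2\theta$: $S_g$ is the reflection fixing $|\psi_b\rangle$, while $\mathcal A\,S_0\,\mathcal A^{-1}$ is the reflection fixing $|\psi\rangle$, and a product of two reflections is a rotation by twice the angle between their axes. Iterating gives $Q^m|\psi\rangle = \sin((2m+1)\theta)\,|\psi_g\rangle + \cos((2m+1)\theta)\,|\psi_b\rangle$, so a measurement after $m$ rounds returns a good state with probability $\sin^2\!\big((2m+1)\theta\big)$, and each round uses $\mathcal A$ and $\mathcal A^{-1}$ exactly once.

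It then remains to choose $m$. I would take $m = \big\lfloor \frac{\pi}{4\theta}\big\rfloor$, so that $(2m+1)\theta$ lands within $\theta$ of $\pi/2$ and the success probability is at least a constant; since $\theta \ge \sqrt{\epsilon}$ this yields $m = O(1/\sqrt{\epsilon})$ invocations of $\mathcal A$, as claimed. The main obstacle is that $\epsilon$ is only a lower bound, so $\theta$ is not known exactly, and a fixed $m$ may overshoot $\pi/2$, where the periodicity of the rotation makes the success probability drop back down. I would resolve this with the standard exponential-search schedule: repeatedly apply $Q^m$ for $m$ drawn uniformly from $\{0,\dots,\lceil c^j\rceil-1\}$ with $c\in(1,2)$ and $j$ increasing, using the one-sided-error property to verify each outcome. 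A routine averaging argument shows that once $c^j = \Theta(1/\theta)$ a random $m$ in that range succeeds with constant probability, and since the costs form a geometric series their sum is still $O(1/\sqrt{\epsilon})$.
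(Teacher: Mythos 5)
Your argument is correct: it is the standard amplitude amplification proof of Brassard, H{\o}yer, Mosca, and Tapp, including the two key ingredients (the rotation-by-$2\theta$ analysis of $Q$ on the invariant plane, and the exponential-search schedule with verification to cope with the fact that $\epsilon$ is only a lower bound on the success probability). The paper itself states this theorem as an imported result from \cite{BHMT00} without proof, and your reconstruction matches the proof in that reference, so there is nothing to contrast.
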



\paragraph{Quantum Walk.}

Quantum walks are the quantum counterpart of Markov chains and random walks. 
The quantum walk search provide a promising source for new quantum algorithms, like
quantum walk search algorithm~\cite{KSW03}, 
element distinctness algorithm~\cite{Amb04}, 
triangle finding~\cite{MSS05}, 
testing group commutativity~\cite{MN05}, 
and matrix verification~\cite{BS06}.  

Let~$P = (p_{xy})$ be the transition matrix of an ergodic symmetric Markov chain on the state space~$X$.
Let $M \subseteq X$ be a set of marked states.  
Assume that the search algorithms use a data structure~$D$ 
that associates some data~$D(x)$ with every state $x \in X$.  
From $D(x)$, we would like to determine if $x \in M$.  
When operating on~$D$, we consider the following three types of cost:
\begin{description}     
\item [Setup cost] 
$s$: The worst case cost to compute $D(x)$, for $x\in X$. 
\item [Update cost] 
$u$: The worst case cost for transition from $x$ to $y$, and update $D(x)$ to $D(y)$. 
\item [Checking cost] 
$c$: The worst case cost for checking if $x \in M$ by using $D(x)$.
\end{description}

Magniez {\it et al.\/}~\cite{MNRS07} developed a new scheme for quantum search, 
based on any ergodic Markov chain. 
Their work generalizes previous results by Ambainis~\cite{Amb04} and Szegedy~\cite{Sze04}. 
They extend the class of possible Markov chains 
and improve the query complexity as follows.

\begin{theorem}{\rm \cite{MNRS07}} \label{MNRS07}\index{Quantum Walk!Schema}
Let $\delta > 0$ be the eigenvalue gap of a ergodic Markov chain $P$ 
and let $\frac{|M|}{|X|} \geq \epsilon$.
Then there is a quantum algorithm that determines if $M$ is empty or finds an element of $M$ with cost
\[
s + \frac{1}{\sqrt{\epsilon}} \left( \frac{1}{\sqrt{\delta}} u + c\right).
\]
\end{theorem}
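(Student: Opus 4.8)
The plan is to realize the classical Markov-chain search by a quantum amplitude-amplification loop in which the classical diffusion step is replaced by an approximate reflection about the stationary superposition, implemented through phase estimation of a quantum walk operator. First I would fix the relevant quantum states. Let $\pi$ be the unique stationary distribution of the ergodic chain $P$, and set $|\pi\rangle = \sum_{x \in X} \sqrt{\pi_x}\, |x\rangle$ in the appropriate space; by ergodicity this is well defined, and for the symmetric chains in our setting $\pi$ is uniform, so the overlap of $|\pi\rangle$ with the marked subspace $\mathrm{span}\{|x\rangle : x \in M\}$ equals $\sqrt{|M|/|X|}$. Writing $|\pi\rangle = \sin\alpha\, |\pi_M\rangle + \cos\alpha\, |\pi_{M^c}\rangle$ with $\sin\alpha = \sqrt{|M|/|X|} \geq \sqrt{\epsilon}$, amplitude amplification (Theorem~\ref{MNRS07} is in the same spirit as the preceding tool) shows that $O(1/\sin\alpha) = O(1/\sqrt{\epsilon})$ alternations of two reflections rotate $|\pi\rangle$ into the marked subspace, after which a measurement yields a marked element or reveals that $M = \emptyset$.

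The two reflections are the heart of the construction. The first, the reflection $\mathrm{ref}(M) = I - 2\sum_{x \in M} |x\rangle\langle x|$ about the marked subspace, is cheap: using the data $D(x)$ it can be applied at checking cost $c$, since it only tests membership in $M$. The second, the reflection $\mathrm{ref}(\pi) = 2|\pi\rangle\langle\pi| - I$ about the stationary superposition, is the hard one and is where the walk enters. Here I would invoke the quantum walk operator $W(P)$ built from $P$ (the Szegedy-type walk on $X \times X$), one application of which is realized by moving from $x$ to a neighbour $y$ while updating $D(x)$ to $D(y)$, hence at cost $O(u)$.

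The spectral relationship between $P$ and $W(P)$ supplies the quadratic speedup. The operator $W(P)$ fixes $|\pi\rangle$ with phase $0$, while on the orthogonal complement its eigenvalues are $e^{\pm i\theta_j}$ with $\cos\theta_j = \lambda_j$, the nontrivial eigenvalues of the symmetrized chain. Since the eigenvalue gap of $P$ is $\delta$, every such $\lambda_j \leq 1-\delta$, so $\theta_j \geq \arccos(1-\delta) = \Omega(\sqrt{\delta})$; thus $|\pi\rangle$ is separated in phase from the rest of the spectrum by $\Omega(\sqrt{\delta})$. Running phase estimation on $W(P)$ to precision $O(\sqrt{\delta})$ therefore distinguishes the phase-$0$ eigenvector $|\pi\rangle$ from all others; conditionally flipping the sign whenever a nonzero phase is detected and then uncomputing the estimate yields an approximation of $\mathrm{ref}(\pi)$. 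Phase estimation to this precision uses $O(1/\sqrt{\delta})$ applications of $W(P)$, for a cost of $O(\frac{1}{\sqrt{\delta}} u)$ per reflection. Combining one setup at cost $s$ with $O(1/\sqrt{\epsilon})$ iterations, each costing $O(\frac{1}{\sqrt{\delta}} u + c)$, gives the claimed bound $s + \frac{1}{\sqrt{\epsilon}}\bigl(\frac{1}{\sqrt{\delta}} u + c\bigr)$.

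The main obstacle is that phase estimation produces only an \emph{approximate} reflection $\mathrm{ref}(\pi)$, and one must ensure that these errors do not destroy the amplitude-amplification analysis over the $\Theta(1/\sqrt{\epsilon})$ iterations. I would control this by the standard argument that the operator-norm error of the approximate reflection, restricted to the relevant two-dimensional invariant subspace, can be driven below any prescribed inverse polynomial at only a constant or logarithmic blow-up in the number of $W(P)$ applications; since the rotation angle per iteration is fixed and there are only $O(1/\sqrt{\epsilon})$ iterations, keeping the accumulated error a small constant suffices to bound the final success probability below by a constant. Making this perturbation estimate precise — quantifying how the phase-estimation error perturbs the effective rotation and showing it accumulates only additively — is the delicate quantitative step, after which amplitude amplification completes the argument.
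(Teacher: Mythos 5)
This theorem is never proved in the paper: it is quoted as an external tool from the cited work of Magniez, Nayak, Roland, and Santha \cite{MNRS07}, so there is no internal proof to compare against, and what you have written is in effect a reconstruction of the cited paper's own argument. As a reconstruction, your outline follows the right strategy and matches what MNRS actually do: prepare the stationary superposition $|\pi\rangle$ using the data structure (cost $s$), alternate a reflection about the marked subspace (implemented from $D(x)$ at cost $c$) with a reflection about $|\pi\rangle$, realize the latter approximately by phase estimation on the Szegedy-type walk operator $W(P)$ (each application costing $O(u)$), and use the phase gap $\arccos(1-\delta) = \Omega(\sqrt{\delta})$ together with $O(1/\sqrt{\epsilon})$ amplitude-amplification rounds. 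Your cost accounting per round, $O(\frac{1}{\sqrt{\delta}}u + c)$, is also the right shape.

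However, there is a genuine gap exactly at the step you defer to ``the standard argument,'' and it is not a routine detail: it is the main technical contribution of \cite{MNRS07}. Phase estimation with $O(1/\sqrt{\delta})$ applications of $W(P)$ yields a reflection with only constant error, and since operator-norm errors of the per-round unitaries accumulate \emph{linearly} over the $\Theta(1/\sqrt{\epsilon})$ rounds, your proposed option of a ``constant blow-up'' cannot keep the total error constant. To make the per-round error $O(\sqrt{\epsilon})$ one must boost the phase estimation, which costs a multiplicative $O(\log(1/\epsilon))$ overhead in walk steps; your sketch therefore establishes only the weaker bound $s + \frac{1}{\sqrt{\epsilon}}\left(\frac{\log(1/\epsilon)}{\sqrt{\delta}}\, u + c\right)$, not the clean bound claimed in the statement. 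Eliminating that logarithmic factor is precisely what requires the more delicate machinery of MNRS (their recursive amplitude-amplification scheme with a careful, non-naive error analysis), which your argument does not supply. For the application in this paper the distinction is minor (the paper's own complexity bounds absorb logarithmic factors elsewhere), but as a proof of the theorem as stated, the sketch is incomplete at its most critical point.
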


In the most practical application (see~\cite{Amb04,MSS05}) the quantum walk takes place on the Johnson graph $J(n,r)$, 
which is defined as follows: 
the vertices are subsets of $\{1, \ldots, n\}$ of size $r$ 
and two vertices are connected iff they differ in exactly one number. 
It is well known, that the spectral gap $\delta$ of $J(n,r)$ 
is $1/r$. 


\subsection{Tool for Quantum Query Lower Bounds}

In this paper, we use the following special case of a method
by Ambainis~\cite{Amb02} to prove lower bounds for the quantum query complexity.

\begin{theorem}{\rm \cite{Amb02}}\label{Amb02}
Let $A \subset \{0, 1\}^n, B\subset \{0, 1\}^n$ and $f:\{0,1\}^n \rightarrow \{0,1\}$ 
such that $f(x)=1$ for all $x\in A$, and $f(y)=0$ for all $y \in B$. 
Let~$m$ and~$m'$ be numbers such that
\begin{enumerate}
\item 
for every $(x_1,\ldots,x_n)\in A$ there are at least $m$ values $i \in \{1,\ldots, n\}$ 
such that $(x_1, \ldots , x_{i-1}, 1 - x_i, x_{i+1}, \ldots , x_n)\in B$,
\item 
for every $(x_1,\ldots,x_n)\in B$ there are at least $m'$ values $i \in \{1,\ldots, n\}$ 
such that $(x_1, \ldots, x_{i-1}, 1 - x_i, x_{i+1}, \ldots , x_n)\in A$.
\end{enumerate}
Then every bounded-error quantum algorithm that computes $f$ has quantum query complexity
$\Omega(\sqrt{m \cdot m'})$.
\end{theorem}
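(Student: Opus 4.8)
The plan is to run the algorithm simultaneously on all hard inputs and track how the overlaps of the computation states decay under queries --- the hybrid, or adversary, argument. First I would form the relation
\[
R = \set{(x,y)\in A\times B}{x\text{ and }y\text{ differ in exactly one coordinate}} .
\]
Hypothesis~(1) says every $x\in A$ has at least $m$ partners in $R$ and hypothesis~(2) says every $y\in B$ has at least $m'$ partners, so $|R|\geq m\,|A|$ and $|R|\geq m'\,|B|$, and therefore $|R|\geq\sqrt{mm'}\,\sqrt{|A||B|}$. Writing $|\psi_x^t\rangle$ for the state of a $T$-query algorithm on input $x$ after $t$ queries, I would use the progress measure
\[
S_t=\sum_{(x,y)\in R}\langle\psi_x^t\mid\psi_y^t\rangle .
\]

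Next I would pin down the two endpoints. Because $U_0$ is applied before the first oracle call, the states $|\psi_x^0\rangle$ are all equal, every overlap is $1$, and hence $S_0=|R|$. At the end, for each $(x,y)\in R$ the algorithm must answer $f(x)=1$ and $f(y)=0$ with error at most $\epsilon<1/2$; a standard two-outcome measurement estimate gives $|\langle\psi_x^T\mid\psi_y^T\rangle|\leq 2\sqrt{\epsilon(1-\epsilon)}=:c<1$, so $|S_T|\leq c\,|R|$ by the triangle inequality. Thus the overlaps must drop by a definite fraction, $|S_0-S_T|\geq(1-c)\,|R|=\Omega(|R|)$.

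The core of the proof, and the step I expect to be the main obstacle, is bounding the change of $S_t$ produced by a single query. For a pair $(x,y)\in R$ differing only in coordinate $i=i(x,y)$, the oracles $O_x$ and $O_y$ act identically except on the part of the state that queries index $i$; letting $P_i$ project onto that part, a short computation bounds the change of each term and gives
\[
|S_{t+1}-S_t|\leq 2\sum_{(x,y)\in R}\|P_{i(x,y)}\,|\psi_x^t\rangle\|\cdot\|P_{i(x,y)}\,|\psi_y^t\rangle\| .
\]
Since distinct partners of a fixed $x$ flip distinct coordinates, $\sum_{y:(x,y)\in R}\|P_{i(x,y)}\,|\psi_x^t\rangle\|^2\leq\||\psi_x^t\rangle\|^2=1$, and symmetrically for a fixed $y$. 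Applying $2ab\leq\alpha a^2+\alpha^{-1}b^2$ termwise and then optimizing the free weight $\alpha$ to balance the $A$- and $B$-sums yields $|S_{t+1}-S_t|\leq 2\sqrt{|A||B|}$.

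Finally I would combine the estimates: the number of queries satisfies
\[
T\geq\frac{|S_0-S_T|}{\max_t|S_{t+1}-S_t|}=\Omega\!\left(\frac{|R|}{\sqrt{|A||B|}}\right)=\Omega(\sqrt{mm'}),
\]
using $|R|\geq\sqrt{mm'}\,\sqrt{|A||B|}$ in the last step. The delicate points are the per-query inequality --- which rests on the fact that two related inputs differ in a single coordinate, so the two oracles disagree only on one query branch --- and the choice of weight $\alpha=\sqrt{|B|/|A|}$ in the Cauchy--Schwarz step that converts the overlap decay into the geometric-mean bound $\sqrt{mm'}$.
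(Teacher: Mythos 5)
The paper states this theorem purely as an imported tool from \cite{Amb02} and gives no proof of it, so the only meaningful comparison is with Ambainis's original adversary argument---which is exactly what you have reconstructed. Your proposal is correct: the relation $R$, the progress measure $S_t$, the endpoint estimates $S_0 = |R|$ and $|S_T| \leq 2\sqrt{\epsilon(1-\epsilon)}\,|R|$, the per-query bound $2\sqrt{|A||B|}$ (whose key counting step uses precisely the fact that distinct partners of a fixed input flip distinct coordinates, i.e.\ the specialization $l = l' = 1$ of Ambainis's general relational theorem), and the final weight optimization all match the standard proof and yield the claimed $\Omega(\sqrt{m\cdot m'})$ bound.
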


\section{The Semigroup Problem}\label{s:semigroup}

In the semigroup problem we have given two sets~$S$ and $M \subseteq S$ 
and a binary operation $\circ: S \times S \rightarrow M$
represented by  a table.
We denote with $n$ the size of the set $S$.
One has to decide whether~$S$ is a semigroup,
that is, whether the operation on~$S$ is associative.

The complexity of this problem was first considered by Rajagopalan and Schulman~\cite{RS00}, 
who gave a randomized algorithm with time complexity of $O(n^2 \log \frac{1}{\delta})$, 
where $\delta$ is the error probability. 
They also showed a lower bound of $\Omega(n^2)$.
The previously best known algorithm was the naive $\Omega(n^3)$-algorithm that checks all triples. 

In the quantum setting, one can do a Grover search over all triples $(a,b,c) \in S^3$
and check whether the triple is associative.
The quantum query complexity of the search is $O(n^{3/2})$.
We construct a quantum algorithm for the semigroup problem that has query complexity $O(n^{5/4})$, 
if the size of $M$ is constant. 
In Section~\ref{s:lowerbound} we give a quantum query lower bound of $\Omega(n)$ for this problem.

Our algorithm is the first application of the recent quantum random walk search scheme 
by Magniez {\it et al.\/} \cite{MNRS07}. 
The quantum random walk of Ambainis~\cite{Amb04}
and Szegedy~\cite{Sze04} doesn't suffice to
get an improvement of the Grover search  mentioned above.

\begin{theorem}
Let $k=n^{\alpha}$ be the size of $M$ with $0 \leq \alpha \leq 1$. 
The quantum query complexity of the semigroup problem is 
\[
\begin{cases}
O(n^{\frac{5}{4} + \frac{\alpha}{2}}), & \text{for } 0 < \alpha \leq \frac 1 6,\\  
O(n^{\frac{6}{5} + \frac{4}{5}\alpha}), & \text{for } \frac{1}{6} < \alpha \leq \frac{3}{8},\\  
O(n^{\frac{3}{2}}), & \text{for } \frac{3}{8} < \alpha \leq 1. 
\end{cases}
\]
\end{theorem}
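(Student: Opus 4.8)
The plan is to reduce the problem to \emph{finding a non-associative triple} $(a,b,c)\in S^3$, that is, one with $(a\circ b)\circ c \ne a\circ(b\circ c)$, and to locate such a triple by the quantum walk search of Theorem~\ref{MNRS07} rather than by the Grover search of Theorem~\ref{Grover}, which only yields $O(n^{3/2})$. The structural fact I would exploit is that every product lands in the small set $M$: writing $m_1=a\circ b$ and $m_2=b\circ c$, we have $m_1,m_2\in M$, and the triple is bad iff $m_1\circ c\ne a\circ m_2$. Thus, once the two intermediate products are known, the associativity test is two look-ups in the action tables $\phi\colon M\times S\to M$, $\phi(m,c)=m\circ c$, and $\psi\colon S\times M\to M$, $\psi(a,m)=a\circ m$. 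Each table has only $kn$ entries, so precomputing both costs $O(kn)=O(n^{1+\alpha})$ queries (which stays below the target bound throughout $0\le\alpha\le 3/8$), and thereafter every associativity test is query-free. This is exactly where the hypothesis $k=n^{\alpha}$ is used.

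With $\phi,\psi$ available, I would run a quantum walk on the Johnson graph $J(n,r)$, whose vertices are the $r$-subsets $R\subseteq S$ and whose spectral gap is $\delta=1/r$. To beat $n^{3/2}$ one cannot afford to force all three witness elements into $R$: with $t=3$ elements required one gets $\tfrac1{\sqrt\epsilon}\tfrac1{\sqrt\delta}u=(n/r)^{3/2}\sqrt r\cdot r=n^{3/2}$ irrespective of $r$. Instead I would keep only two witness elements, say the candidates for $a$ and $c$, inside $R$, so that a fixed witness is captured with the larger probability $\epsilon=\Omega((r/n)^2)$ and $\tfrac1{\sqrt\epsilon}=O(n/r)$. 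The cached data $D(R)$ then needs to hold only $M$-related information, keeping the update cost as low as $u=O(r+k)$. The third element $b$ is produced inside the \emph{check}: for $a,c\in R$, deciding whether some $b\in S$ gives a bad triple requires only the single products $a\circ b$ and $b\circ c$ (the two outer multiplications being resolved by $\phi,\psi$), so the check is an inner quantum search whose query cost $c$ is paid per walk step.

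The decisive point is that in Theorem~\ref{MNRS07} the checking cost $c$ is \emph{added} and is \emph{not} multiplied by $1/\sqrt\delta$. This is precisely the feature absent from the earlier walks of Ambainis~\cite{Amb04} and Szegedy~\cite{Sze04}, and it is what lets a moderately expensive, caching inner search be amortised against the outer walk. Substituting the costs into the MNRS bound $s+\tfrac1{\sqrt\epsilon}\bigl(\tfrac1{\sqrt\delta}u+c\bigr)$ produces a bound that is a function of $r$, of the inner subset size, and of $k$; I would then tune the subset sizes to balance the setup term $s$, the walk term $\tfrac{n}{\sqrt r}\,u$, the checking term $\tfrac{n}{r}\,c$, and the precomputation $O(kn)$. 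The three regimes in $\alpha$ are exactly the three orderings of these terms: for $0<\alpha\le\tfrac16$ one ordering is optimal and gives the exponent $\tfrac54+\tfrac{\alpha}{2}$; for $\tfrac16<\alpha\le\tfrac38$ a different pair of terms balances and gives $\tfrac65+\tfrac45\alpha$ (the two formulas agreeing at $\alpha=\tfrac16$, where both equal $\tfrac43$); and once $\alpha>\tfrac38$ the $M$-dependent terms overwhelm the savings from caching, the optimal subset size is pushed to its boundary, and the complexity saturates at the plain Grover value $O(n^{3/2})$ of Theorem~\ref{Grover} (the two formulas again matching at $\alpha=\tfrac38$).

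The hard part will be the tension between update and checking cost. Caching enough to make the check query-cheap tempts one to store full rows and columns of the table, but that costs $O(n)$ per swap and forces the walk term back up to $n^{3/2}$; conversely, caching only $M$-actions keeps $u$ small but makes the check a search whose naive cost $\Theta(r\sqrt n)$ reinstates $\tfrac1{\sqrt\epsilon}c=(n/r)\cdot r\sqrt n=n^{3/2}$. The resolution is to let the check itself be a caching walk (an element-distinctness-style inner walk rather than plain Grover), whose data structure depends on the current outer subset; arranging this interdependence so that $u$ stays $O(r+k)$ while $c$ drops below the $\sqrt r\,u/\sqrt\epsilon$ barrier is the heart of the argument, and is where the small codomain $M$ and the additive checking cost of Theorem~\ref{MNRS07} must be used together. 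The remaining points—that the walk is on an ergodic symmetric chain with $\delta=1/r$, and that the bounded-error inner search can be used as a reliable check inside the outer walk—are routine, the latter absorbing only the usual constant-blow-up amplification.
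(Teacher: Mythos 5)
Your overall framework is the right one---the MNRS walk on a Johnson graph, the observation that forcing all three witness elements into the walked subset gives $n^{3/2}$ regardless of $r$, and hence that exactly two of the three elements should live in the vertex with the third found inside the check---and your precomputation of the action tables $\phi\colon M\times S\to M$ and $\psi\colon S\times M\to M$ for $O(kn)$ queries is sound (the paper instead pays $O(k)$ fresh queries per check). But you put the \emph{wrong} two elements into the vertex. Holding candidates for $a$ and $c$ and searching over the middle element $b$ is fatal: both products that still require queries, $a\circ b$ and $b\circ c$, involve the searched element $b$, so nothing stored in $D(R)$ can ever be reused during the check. Your own accounting shows the consequence: $c=\Theta(r\sqrt n)$ and $\frac{1}{\sqrt\epsilon}\,c=n^{3/2}$. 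The ``inner caching walk'' you invoke to escape this is never constructed, and it cannot work within Theorem~\ref{MNRS07}: any inner walk over subsets $B'$ of candidate $b$'s must pay update cost $\Omega(r)$ per step (a newly inserted $b$ has to be multiplied against essentially all of $R$ before the triples involving it can be tested), so its cost is at least $\frac{1}{\sqrt{\epsilon'}}\frac{1}{\sqrt{\delta'}}\,u'=\sqrt{n/r'}\cdot\sqrt{r'}\cdot\Omega(r)=\Omega(r\sqrt n)$---exactly the barrier it was supposed to break. The step you explicitly defer as ``the heart of the argument'' is thus genuinely missing, and the route you sketch for it dead-ends.

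The repair is to hold the \emph{first two} elements of the triple. The paper walks on pairs $(A,B)$ in $J(n-k,r)\times J(n-k,r)$ with database $D(A,B)=\{(a,b,a\circ b): a\in A\cup M,\ b\in B\cup M\}$, the marked vertices being those with $(A\circ B)\circ S\neq A\circ(B\circ S)$. Then $a\circ b$ is cached (both factors lie in the vertex), $a\circ(b\circ c)$ is cached because $b\circ c\in M$ and all products $A\circ M$ are in the database, and the check is a Grover search over pairs $(b,c)\in B\times S$ in which each pair costs one query for $b\circ c$ plus $O(k)$ queries for $(A\circ b)\circ c\subseteq M\circ c$; this gives checking cost $O(k\sqrt{rn})$, hence a total of $O(r^2+n\sqrt r+kn^{3/2}/\sqrt r)$, and optimizing $r=n^\beta$ against $k=n^\alpha$ yields precisely the three regimes of the theorem. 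Incidentally, if you graft your precomputed tables $\phi,\psi$ onto this correct vertex structure, the per-pair check cost drops from $O(k)$ to $O(1)$ and the total becomes $O(kn+r^2+n\sqrt r+n^{3/2}/\sqrt r)=O(n^{1+\alpha}+n^{5/4})$, which is even stronger than the stated bounds---so your precomputation idea has real value, but only after the choice of which elements the walk carries is fixed.
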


\begin{proof}
We use the quantum walk search scheme of Theorem~\ref{MNRS07}. 
To do so,
we construct a Markov chain and a database for checking if a vertex of the chain is marked. 

Let $A$ and $B$ two subsets of~$S$ of size~$r$ that are disjoint from~$M$.
We will determine~$r$ later.
The database is the set 
\[
                D(A,B)=\set{(a,b, a \circ b)}{a \in A \cup M \text{ and } b \in B \cup M}.
\]
Our quantum walk is done on the  categorical graph product of two 
Johnson graphs $G_J=J(n-k,r) \times J(n-k,r)$. 
The marked vertices of $G_J$ correspond to pairs 
$(A,B)$ with $(A \circ B) \circ S \neq A \circ (B \circ S)$.  
In every step of the walk, we exchange one row and one column of~$A$ and~$B$.
 
Now we compute the quantum query costs for the setup, update and checking. 
The setup cost for the database $D(A,B)$ is $(r+k)^2$ and the update cost is $r+k$. 
To check whether a pair $(A,B)$ is marked, 
we search for a pair 
$(b,c)\in B \times S$ with $(A\circ b) \circ c \neq A \circ (b \circ c)$. 
The quantum query cost to check this inequality is $O(k)$, by using our database. 
Therefore, by applying Grover search, the checking cost is $O(k \sqrt{n r})$.
The spectral gap of the walk on $G_J$ is 
$\delta = O(1/r)$ for $1 \le r \le \frac{n}{2}$, see~\cite{BS06}.
If there is a triple $(a,b,c)$ with $(a \circ b) \circ c \neq a \circ (b \circ c)$, 
then there are at least  $\binom{n-k-1}{r-k-1}^2$ marked sets $(A, B)$.
Therefore we have
\[
\epsilon 
\geq \frac{|M|}{|X|} 
\geq \left(\frac{\binom{n-k-1}{r-k-1}}{\binom{n-k}{r-k}}\right)^2 
\geq \left(\frac{r-k}{n-k}\right)^2.
\]
Let $r=n^{\beta}$, for $0 < \beta < 1$.
Assuming $r>2k$ we have
\[
\frac{1}{\sqrt{\epsilon}} \leq \frac{n-k}{r-k} \leq \frac{n}{r/2} = \frac{2n}{r}.
\]
Then the quantum query complexity of the semigroup problem is
\[
O\left(r^2 + \frac{n}{r} \left(\sqrt{r} \cdot r + \sqrt{n r} \cdot k \right)\right)
= O\left( n^{2\beta} + n^{1+\frac{\beta}{2}} + n^{\frac{3}{2} + \alpha - \frac{\beta}{2}}\right).
\]
Now we choose~$\beta$ depending on~$\alpha$ such that this expression is minimal.
A straight forward calculation gives the bounds claimed in the theorem.
\end{proof}

For the special case that $\alpha = 0$,
i.e., only a constant number of elements occurs in the multiplication table,
we get
\begin{corollary}
The quantum query complexity of the semigroup problem is  $O(n^{\frac{5}{4}})$, if $M$ has constant size.
\end{corollary}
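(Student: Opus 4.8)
The plan is to specialize the cost analysis from the proof of the preceding theorem to the case where $M$ has constant size, i.e.\ $k = O(1)$, which corresponds to $\alpha = 0$. Since the corollary is really the boundary case of the first regime of the theorem, I would not invoke the case distinction directly but instead return to the general query-cost expression and substitute $\alpha = 0$ there.

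Concretely, recall from that proof that with $r = n^{\beta}$ the total cost is
\[
O\!\left(n^{2\beta} + n^{1+\frac{\beta}{2}} + n^{\frac{3}{2}+\alpha-\frac{\beta}{2}}\right),
\]
arising from the setup, update, and checking contributions respectively. Putting $\alpha = 0$ collapses the checking term to $n^{3/2-\beta/2}$, leaving
\[
O\!\left(n^{2\beta} + n^{1+\frac{\beta}{2}} + n^{\frac{3}{2}-\frac{\beta}{2}}\right).
\]

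Next I would choose $\beta$ to minimize the largest of the three exponents. The update and checking exponents $1 + \beta/2$ and $3/2 - \beta/2$ move in opposite directions in $\beta$, so I would balance them by solving $1 + \beta/2 = 3/2 - \beta/2$, which gives $\beta = 1/2$, that is $r = \sqrt{n}$, and makes both exponents equal to $5/4$. At this choice the setup exponent is $2\beta = 1 < 5/4$, so it is dominated; a short check confirms that moving $\beta$ to either side of $1/2$ only raises one of the two balanced terms, so $\beta = 1/2$ is optimal. The resulting bound is $O(n^{5/4})$.

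Finally I would verify that the side conditions used in the theorem's proof still hold. The estimate $\tfrac{1}{\sqrt{\epsilon}} \le \tfrac{2n}{r}$ required $r > 2k$; since $r = \sqrt{n}$ grows while $k = O(1)$, this holds for all sufficiently large $n$, and $r \le n/2$ is immediate, so the spectral-gap formula $\delta = O(1/r)$ remains valid. I do not expect a genuine obstacle here: the only point deserving care is confirming that $\beta = 1/2$ is the true minimizer and that the setup term does not secretly dominate, which the check above settles.
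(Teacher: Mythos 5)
Your proposal is correct and follows essentially the same route as the paper: the paper obtains the corollary by specializing the theorem's quantum-walk analysis to $\alpha = 0$, and your explicit substitution into the cost expression $O(n^{2\beta} + n^{1+\beta/2} + n^{3/2+\alpha-\beta/2})$ with the balancing choice $\beta = 1/2$ (i.e.\ $r = \sqrt{n}$) is exactly the ``straight forward calculation'' the theorem's proof alludes to. Your extra care about the side conditions $r > 2k$ and $r \le n/2$, and about $\alpha = 0$ being a boundary case of the theorem's stated range, is sound but does not change the substance of the argument.
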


Note that the time complexity of our algorithm is $O(n^{1.5} \log n)$.

\section{Group Problems}\label{s:group}

In this section we consider the problem whether a given finite monoid $M$ is in fact a group.
That is, we have to check whether every element of~$M$ has an inverse.
The monoid~$M$ has $n$ elements and is given by its multiplication table
and the identity element~$e$.

To the best of our knowledge, 
the group problem has not been studied before.
The naive approach for the problem checks for every element~$a \in M$,
whether $e$ occurs in $a$'s row in the multiplication table.
The query complexity is $O(n^2)$.
We develop a (classical) randomized algorithm that solves the problem with $O(n^{\frac{3}{2}})$ queries
to the multiplication table.
Then we show that on a  quantum computer  
the query complexity can be improved to $\widetilde{O}(n^{\frac{11}{14}})$.

\begin{theorem}
Whether a given monoid with $n$ elements is a group can be decided with query complexity
\begin{enumerate}
\item 
$O(n^{\frac{3}{2}})$ by a randomized algorithm with probability $\geq 1/2$,
\item
$O(n^{\frac{11}{14}} \log n)$ by a quantum query algorithm.
\end{enumerate}
\end{theorem}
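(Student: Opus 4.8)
The plan is to reduce the group problem to a purely local test on the multiplication table. First I would record two characterizations of when a finite monoid $M$ is a group. The elementary one is that $M$ is a group iff every element has a right inverse, i.e.\ iff the identity $e$ occurs in every row of the table: if $ab=e$ and $bc=e$ then $a=a(bc)=(ab)c=c$, so $ba=bc=e$ and $b$ is a two-sided inverse of $a$. This already shows correctness of the naive $O(n^2)$ row search, but it does not by itself beat it. The characterization I would actually build on is sharper: \emph{$M$ is a group iff $e$ is its only idempotent}. The forward direction is immediate ($x^2=x\Rightarrow x=e$ in a group); for the converse I would use the standard finite-semigroup fact that every element $a$ has a power $a^m$ with $m\ge 1$ that is idempotent (the sequence $a,a^2,\dots$ is eventually periodic, and a suitable power lying in the cycle is idempotent). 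If the only idempotent is $e$, then $a^m=e$, whence $a\cdot a^{m-1}=e$ and $a$ is invertible; so $M$ is a group.

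Given this, both algorithms become searches over a single diagonal. The point is that $a$ is a non-identity idempotent iff the table entry at position $(a,a)$ equals $a$ while $a\neq e$, so the entire question is decided by the $n$ diagonal entries $a\circ a$. For part~(1) I would simply read these $n$ entries and check for a non-identity idempotent; this is a deterministic $O(n)$-query procedure, comfortably inside the claimed $O(n^{3/2})$, and the randomized bound follows a fortiori. For part~(2) I would apply Grover search (Theorem~\ref{Grover}) over the $n$ diagonal positions, treating position $a$ as marked when $a\circ a=a$ and $a\neq e$; deciding whether a marked position exists costs $O(\sqrt{n})$ quantum queries, which lies inside the claimed $O(n^{11/14}\log n)$.

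The only non-routine step in this plan is the sharper characterization; once the idempotent-power lemma is in hand the two bounds are immediate, and in fact this route proves something stronger than the stated exponents. The reason the problem looks harder at first---and the main obstacle for any approach that searches for a missing inverse directly---is that the set of non-units forms a two-sided ideal that may be as small as a single element (for instance, adjoin an absorbing zero $z$ to a group, giving a unique non-unit with $z\circ z=z$). Hence one cannot hope to find a bad row by sampling $O(\sqrt{n})$ rows, and a direct ``find a row without $e$'' strategy must either fall back on the diagonal shortcut above or run an MNRS quantum walk (Theorem~\ref{MNRS07}) over $r$-subsets on a Johnson graph of gap $1/r$, with a database supporting a cheap idempotent/collision check; balancing the setup, update, and checking costs by optimizing $r$ is exactly the step that would produce the stated $n^{11/14}$ exponent.
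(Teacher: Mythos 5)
Your proof is correct, and it takes a genuinely different route from the paper's --- in fact it establishes a much stronger result. The paper never uses the idempotent characterization: it defines the order of an element, proves that $a$ is invertible iff its power sequence hits $e$ before repeating (Lemma~\ref{le:order}) and that non-invertibility propagates to powers (Lemma~\ref{le:inverse}), and then runs a two-phase algorithm --- phase~1 examines the power sequences $S_r(a)=(a,a^2,\dots,a^r)$ for all $a$ (quantumly: a Grover search over $a$ whose inner check combines Grover search for $e$ with Ambainis's element distinctness on $S_r(a)$), phase~2 samples rows and searches them for $e$ (quantumly: amplitude-amplified row search, exploiting that at least $r$ elements lack inverses when phase~1 fails) --- and the exponents $n^{3/2}$ and $n^{11/14}\log n$ come from optimizing the cutoff $r$. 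Your observation that a finite monoid is a group iff $e$ is its only idempotent (via the standard fact that every element of a finite semigroup has an idempotent positive power $a^m$, so a unique idempotent forces $a^m=e$ and hence invertibility) collapses the whole problem to inspecting the $n$ diagonal entries $a\circ a$: this yields a deterministic $O(n)$ classical algorithm and, by Grover search (Theorem~\ref{Grover}) over the diagonal, an $O(\sqrt{n})$ quantum algorithm, both of which subsume the claimed bounds and essentially settle the open question raised in the paper's conclusion about the group problem's quantum complexity. Your counting argument for why row-sampling cannot work (the non-units form an ideal that can be a single absorbing element) is also sound and explains why the paper's phase~2 alone needs phase~1 as a preprocessing guarantee. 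One small inaccuracy in your closing aside: the paper's $n^{11/14}$ exponent does not come from an MNRS quantum walk (Theorem~\ref{MNRS07}) --- that technique is only used for the semigroup problem --- but from balancing the phase~1 cost $\sqrt{n}\,r^{2/3}\log r$ against the phase~2 cost $n/\sqrt{r}$ at $r=n^{3/7}$; this does not affect the validity of your argument.
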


\begin{proof}
Let $a \in M$.
We consider the sequence of powers $a, a^2, a^3, \dots$.
Since $M$ is finite, there will be a repetition at some point.
We define the {\em order of\/}~$a$ as the smallest power~$t$,
such that $a^t = a^s$, for some $s < t$.
Clearly, if $a$ has an inverse, $s$ must be zero.

\begin{lemma}\label{le:order}
Let $a \in M$ of order~$t$.
Then $a$ has an inverse iff $a^t = e$.
\end{lemma}

Hence the powers of~$a$ will tell us at some point whether $a$ has an inverse.
On the other hand,
if $a$ has no inverse,
the powers of~$a$ provide more elements with no inverse as well.

\begin{lemma}\label{le:inverse}
Let $a \in M$.
If $a$ has no inverse, then $a^k$ has no inverse, for all $k \geq 1$.
\end{lemma}

Our algorithm has two phases.
In phase~1,
it computes the powers of every element up to certain number~$r$.
That is, we consider the sequences $S_r(a) = (a, a^2,  \dots, a^r )$, for all $a \in M$.
If $e \in S_r(a)$ then $a$ has an inverse by Lemma~\ref{le:order}.
Otherwise,
if we find a repetition in the sequence $S_r(a)$,
then, again by Lemma~\ref{le:order}, $a$ has no inverse  and we are done.

If we are not already done by phase~1,
i.e.\ there are some sequences $S_r(a)$ left such that 
$e \not\in S_r(a)$ and $S_r(a)$ has pairwise different elements,
then the algorithm proceeds to phase~2.
It selects some $a \in M$ uniformly at random
and checks whether~$a$ has an inverse by searching for~$e$ in the row of~$a$ in the multiplication table.
This step is repeated $n/r$ times.

The query complexity $t(n)$ of the algorithm is bounded by $nr$ in phase~1 and by $n^2/r$ in phase~2.
That is $t(n) \leq  nr + n^2/r$, which is minimized for $r = n^{\frac{1}{2}}$.
Hence we have $t(n) \leq 2n^{\frac{3}{2}}$.

For the correctness  observe that the algorithm accepts with probability~1 if $M$ is a group.
Now assume that $M$ is not a group.
Assume further that the algorithm does not already detect this  in phase~1.
Let $a$ be some element without an inverse.
By Lemma~\ref{le:order},
the sequence $S_r(a)$ has $r$ pairwise different elements
which don't have inverses too
by Lemma~\ref{le:inverse}.
Therefore in phase~2, 
the algorithm  picks an element without an inverse with probability at least $r/n$.
By standard arguments,
the probability that at least one out of $n/r$ many randomly chosen elements has no inverse is constant.

For the quantum query complexity we use Grover search and amplitude amplification.
In phase~1,
we search for an $a \in M$, such that 
the sequence $S_r(a)$ has~$r$ pairwise different entries different from~$e$.
This property can be checked by 
first searching $S_r(a)$ for an occurance of~$e$ by a Grover search with $\sqrt{r} \log r$ queries.
Then, if~$e$ doesn't occur in $S_r(a)$,
we check whether there is an element in $S_r(a)$ that occurs more than once.
This is the element distinctness problem and can be solved with $r^{2/3} \log r$ queries, see~\cite{Amb04}.
Therefore the quantum query complexity of phase~1  is bounded by $\sqrt{n} \cdot r^{2/3} \log r$.

In phase~2 we search for an $a \in M$ such that $a$ has no inverse.
In phase~2 we actually search the row of~$a$ in the multiplication table.
Hence this takes $\sqrt{n}$ queries.
Since at least $r$ of the $a$'s don't have an inverse,
by amplitude amplification we get $\sqrt{n} \sqrt{n/r} = n/\sqrt{r}$ queries in phase~2.

In summary,
the quantum query complexity is 
$\sqrt{n} \cdot r^{2/3} \log r + n/\sqrt{r}$, 
which is minimized for $r = n^{\frac{3}{7}}$. 
Hence we have a $O(n^{\frac{11}{14}} \log n)$ quantum query algorithm.
\end{proof}


\section{Lower Bounds}\label{s:lowerbound}

\begin{theorem}
The semigroup problem requires $\Omega(n)$ quantum queries.
\end{theorem}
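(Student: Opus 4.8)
The plan is to apply the adversary method of Theorem~\ref{Amb02}. I would regard the input as the $n \times n$ multiplication table, so that the $N = n^2$ query positions are the table cells, and I would restrict attention to a two-valued alphabet in each cell so that the Boolean formulation of Theorem~\ref{Amb02} applies directly. The whole point will be to exploit the fact that a ``generic'' associative table, such as the Cayley table of a group, is maximally fragile: \emph{every} single-cell change destroys associativity.

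Concretely, fix a group $G$ of order $n$, say the cyclic group, and let $T_0$ be its Cayley table; this is associative, so $f(T_0)=1$, and I set $A=\{T_0\}$. For the negative instances I corrupt a single cell: for each cell $(a,b)$ I replace the correct entry $ab$ by one fixed wrong value $h_{ab}\neq ab$, and let $B$ be the set of the $n^2$ tables obtained in this way. Restricting each cell to the two values $ab$ and $h_{ab}$ makes every instance a point of $\{0,1\}^{n^2}$, with $T_0$ the all-zero string and each element of $B$ a weight-one string, exactly the setting of Theorem~\ref{Amb02}.

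The technical heart is a rigidity lemma: corrupting a single cell of a group table destroys associativity. I would prove it by cancellation. If cell $(a,b)$ is changed to $h\neq ab$, then for a suitable third element $c$ the triple $(a,b,c)$ witnesses non-associativity, since $(a\circ b)\circ c = hc$ while $a\circ(b\circ c)=abc$, and $hc\neq abc$ by right cancellation. I only have to choose $c$ so that none of the cells $(h,c)$, $(b,c)$, $(a,bc)$ read off in this computation is itself the corrupted cell $(a,b)$, which is possible for all $n$ large enough. This shows $f(y)=0$ for every $y\in B$, so $A$ and $B$ are legitimate positive and negative families.

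It then remains to count sensitivities. For $x=T_0$ the same lemma shows that flipping any of the $n^2$ bits yields an element of $B$, so I may take $m=n^2$. For any $y\in B$, flipping back the unique corrupted bit returns $T_0\in A$, so $m'\geq 1$. Theorem~\ref{Amb02} then gives $\Omega(\sqrt{m\cdot m'})=\Omega(\sqrt{n^2})=\Omega(n)$. The main obstacle is the rigidity lemma itself, and in particular ruling out that a single corruption could be silently cancelled by coincidences elsewhere in the table; handling the degenerate choices of $c$ is the only fiddly point. Once that is in hand the counting is immediate, and it is the strong asymmetry $m=n^2$ against $m'=1$ that delivers the linear bound.
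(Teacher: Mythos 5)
Your proof is correct, and it follows the same overall framework as the paper: Ambainis' adversary bound (Theorem~\ref{Amb02}) applied to a family of associative tables and their single-cell corruptions, with the asymmetric counting $m = \Theta(n^2)$, $m' = \Theta(1)$ yielding $\Omega(n)$. Where you genuinely diverge is in the instance family and hence in the key lemma. The paper builds tiny, hand-crafted $0$-$1$ tables: the positive instances have ones exactly at $(1,1),(1,c),(c,1),(c,c)$, whose associativity is a one-line check, and the negative instances add a single extra one at a position $(a,b)$, whose non-associativity is witnessed directly by the triple $(a,b,c)$. You instead take a single group Cayley table as the positive instance and prove a rigidity lemma --- every single-cell corruption of a group table breaks associativity --- via cancellation, handling the degenerate choices of the witness $c$ (your ``large enough $n$'' hedge is needed: for $n=2$ the corrupted $\mathbb{Z}_2$ table can remain associative, but $n \geq 3$ suffices). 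Your route is arguably more elegant, at the cost of the rigidity lemma and of an alphabet-restriction step (two values per cell, simulated by Boolean queries) that the paper avoids because its tables are already $0$-$1$ valued. One substantive difference in the strength of the conclusions: the paper's instances have range $M = \{0,1\}$ of constant size, so its lower bound holds even for the constant-$|M|$ version of the semigroup problem --- precisely the regime of the paper's $O(n^{5/4})$ upper bound --- whereas your corrupted Cayley tables have range essentially all of $S$, so your argument proves the theorem as stated but says nothing about the small-range case.
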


\begin{proof}
Let~$S$ be a set of size $n$ and $\circ: S \times S \rightarrow \{0,1\}$ a binary operation represented by a table.
We apply Theorem \ref{Amb02}.
The set $A$ consists of all $n\times n$ matrices, where the entry of position $(1,1),
(1,c), (c,1)$ and $(c,c)$ is 1, 
for $c\in S - \{0,1\}$, and zero otherwise. 
It is easy to see, 
that the multiplication tables of~$A$ are associative, 
since $(x \circ y) \circ z = x \circ (y \circ z)=1$ for all $x,y,z \in \{1,c\}$ and zero otherwise.

The set $B$ consists of all $n \times n$ matrices, where the entry of position $(1,1),
(1,c), (c,1)$, $(c,c)$ and $(a,b)$ is~1, 
for fixed $a, b, c\in S- \{0,1\}$ with $a, b\neq c$, and zero otherwise. 
Then $(a \circ b) \circ c = 1$ and $a \circ (b \circ c)=0$. 
Therefore the multiplication tables of~$B$ are not associative.

From each $T \in A$, 
we can obtain $T'\in B$ by replacing the  entry~0 of $T$ at $(a,b)$ by~1,
for any $a,b \notin \{0,1,c\}$.
Hence we have $m=\Omega(n^2)$.  
From each $T'\in B$, we
can obtain $T\in A$ by replacing the entry~1 of $T'$ at position $(a,b)$ by~0,
for  $a, b\notin \{0,1,c\}$.
Then we have $m' = 1$.
By Theorem~\ref{Amb02}, 
the quantum query complexity is $\Omega(\sqrt{m \cdot m'}) = \Omega(n)$.
\end{proof}

Next, 
we consider the {\em identity problem\/}:
given the multiplication table on a set~$S$,
decide whether there is an identity element.\footnote{Here we consider right identity, the case of left identity is analogous.}
We show that the identity problem requires linearly many quantum queries. 
We start by considering the {\em 1-column problem\/}:
given a 0-1-matrix of order~$n$, 
decide whether it contains a column that is all~1.

\begin{lemma}
The 1-column problem requires $\Omega(n)$ quantum queries.
\end{lemma}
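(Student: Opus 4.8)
The plan is to apply the adversary method of Theorem~\ref{Amb02} directly to the 1-column problem, mirroring the structure of the preceding semigroup lower bound. The input is a 0-1-matrix of order~$n$, viewed as $N = n^2$ Boolean variables, and the function~$f$ outputs~$1$ iff some column is all~$1$. I would build two families $A$ (yes-instances) and $B$ (no-instances) that differ in a single entry, so that a one-bit flip moves between them.

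First I would define~$B$ to be a carefully chosen set of negative instances that are "just barely" not a yes-instance: matrices having exactly one column, say column~$j$, that is all~$1$ except for a single~$0$ in some row~$i$, with every other column containing at least one~$0$ that cannot be cheaply repaired. The cleanest choice is to let~$B$ consist of matrices with a unique distinguished column that has precisely one zero entry, and all other columns fixed to have a zero in a way that flipping a single bit cannot complete them. Then~$A$ would be the set of matrices obtained from some $T' \in B$ by flipping that unique~$0$ in the distinguished column to~$1$, producing an all-$1$ column and hence a yes-instance. With this setup, for each $T \in A$ there is essentially one critical variable whose flip lands in~$B$ (the entry one would have to turn back to~$0$), giving $m = 1$; conversely, for each $T' \in B$ the single~$0$ in the distinguished column is the variable whose flip reaches~$A$, and I want to arrange the construction so that there are $\Omega(n)$ such valid flips across the family, yielding $m' = \Omega(n)$.

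To get a genuinely linear bound I expect to need the asymmetry to run the other way from the naive guess, so the main design work is to make one of $m, m'$ large while keeping the other at least~$1$. Concretely, I would fix all columns except one to be identically zero (no~$1$ at all), so they can never become all-$1$ by a single flip, and let the remaining column~$j$ be all-$1$ with a single planted~$0$ at a variable position. For a fixed no-instance where column~$j$ has its~$0$ in row~$i$, flipping that entry to~$1$ makes column~$j$ all-$1$: this is the unique bit whose flip moves the instance into~$A$, so I must instead aggregate over the choice of which row holds the~$0$, collecting $\Omega(n)$ positive-to-negative transitions. By letting~$A$ contain, for a fixed all-$1$ column~$j$, the single matrix with column~$j$ all-$1$ and all else~$0$, every one of the $n$ entries of column~$j$ is a critical variable whose flip to~$0$ destroys the all-$1$ column and produces a member of~$B$; this gives $m = \Omega(n)$. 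Matching each resulting no-instance back requires flipping that same single entry to~$1$, so $m' = 1$.

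By Theorem~\ref{Amb02} the quantum query complexity is then $\Omega(\sqrt{m \cdot m'}) = \Omega(\sqrt{n \cdot 1})$, which is only $\Omega(\sqrt n)$ --- so the hard part, and the step I expect to be the main obstacle, is balancing the two parameters to reach $\Omega(n)$. The resolution is to make \emph{both} transition counts grow: I would enlarge~$A$ and~$B$ so that each yes-instance admits $\Omega(n)$ single-bit flips into the no-set and each no-instance admits $\Omega(n)$ single-bit flips into the yes-set, for instance by working with instances built around $\Theta(n)$ candidate columns that are simultaneously near-complete, so that repairing or breaking any one of them is a valid transition. Achieving $m, m' = \Omega(n)$ simultaneously is exactly the combinatorial balancing act, and once it holds Theorem~\ref{Amb02} delivers $\Omega(\sqrt{n \cdot n}) = \Omega(n)$ as claimed.
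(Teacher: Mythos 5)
Your final idea---making all $n$ columns simultaneously ``near-complete'' so that both transition counts are linear---is exactly the paper's construction, but your proposal stops short of carrying it out, and that execution is essentially the whole proof here. The paper takes $B$ to be the matrices in which \emph{every} column has exactly one entry $0$ (all other entries $1$), and $A$ to be the matrices in which one column is all $1$ and each of the remaining $n-1$ columns has exactly one $0$. Then every $T \in A$ has $n$ critical entries: flipping any one of the $n$ entries of its all-$1$ column yields a matrix in which every column has exactly one $0$, i.e.\ a member of $B$, so $m = n$. And every $T' \in B$ has $n$ critical entries: flipping the unique $0$ in any one of its $n$ columns completes that column and yields a member of $A$, so $m' = n$. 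Theorem~\ref{Amb02} then gives $\Omega(\sqrt{n \cdot n}) = \Omega(n)$. Your closing sketch (``instances built around $\Theta(n)$ candidate columns that are simultaneously near-complete'') points at precisely this, but you never define the two families or verify that a single bit flip actually lands in the opposite family---and it is that verification which forces the ``exactly one $0$ per column'' choice. As written, the claimed balance $m, m' = \Omega(n)$ remains an assertion rather than a proof.

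There is also a local error worth flagging: in your first construction you argue $m' = \Omega(n)$ by collecting transitions ``across the family.'' Theorem~\ref{Amb02} requires at least $m'$ critical flips \emph{per instance} $y \in B$, not in aggregate over $B$; with a single $0$ in the distinguished column, each such $y$ admits exactly one flip into $A$, so that construction gives $m' = 1$ no matter how large the family is. Your second construction and its $\Omega(\sqrt{n})$ analysis are correct, and your diagnosis of why it falls short is the right one; only the final balancing step was left unproved.
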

\begin{proof} 
We use Theorem~\ref{Amb02}. 
The set~$A$ consists of all matrices, 
where in $n-1$ columns there is exactly one entry with value~0, 
and the other entries of the matrix are~1. 
The set~$B$ consists of all matrices, 
where in every column there is exactly one entry with value~0, 
and the other entries of the matrix are~1. 
From each matrix $T \in A$, 
we can obtain $T'\in B$ by changing one entry in the 1-column from~1 to~0.
Then we have $m=n$.  
From each matrix $T'\in B$, 
we can obtain $T\in A$ by changing one entry from~0 to~1.
Then we have  $m' = n$.
By Theorem \ref{Amb02}, the quantum query complexity is $\Omega(n)$.
\end{proof}

\begin{theorem}
The identity problem requires $\Omega(n)$ quantum queries.
\end{theorem}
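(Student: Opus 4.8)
The plan is to derive the bound by reducing the $1$-column problem, which the preceding lemma shows requires $\Omega(n)$ quantum queries, to the identity problem. The key observation is that a right identity is exactly a distinguished column of the table: an element~$e$ is a right identity iff the $e$-th column of the multiplication table reads $a \circ e = a$ for every row~$a$, i.e.\ iff that column equals the vector $(1, 2, \ldots, n)^{\top}$ of row labels. Detecting such an ``identity column'' is structurally the same task as detecting the all-$1$ column in the $1$-column problem, and the reduction makes this precise.

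Concretely, I would take an instance $X = (x_{ab})$ of the $1$-column problem on the set $S = \{1, \ldots, n\}$ and build a multiplication table~$T$ over~$S$ by setting
\[
T_{ab} = \begin{cases} a, & \text{if } x_{ab} = 1, \\ \sigma(a), & \text{if } x_{ab} = 0, \end{cases}
\]
where $\sigma$ is any fixed, fixed-point-free map on~$S$, for instance $\sigma(a) = a+1$ with $\sigma(n) = 1$. Then $T_{ab} = a$ holds precisely when $x_{ab} = 1$, so column~$b$ of~$T$ is an identity column iff $x_{ab} = 1$ for all~$a$, that is, iff column~$b$ of~$X$ is all-$1$. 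Hence~$T$ has a right identity iff~$X$ contains an all-$1$ column. In particular a single $0$-entry $x_{ab}$ forces $T_{ab} = \sigma(a) \neq a$, which rules out column~$b$, so the filler values never create a spurious identity.

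It remains to argue that the reduction is query-efficient in the quantum model. Each table entry $T_{ab}$ is a deterministic function of~$a$, $b$, and the single bit $x_{ab}$, so one application of the table oracle $O_T$ can be simulated coherently by two applications of the matrix oracle $O_X$ (query $x_{ab}$ into an ancilla, write $T_{ab}$ into the answer register by an input-independent unitary, then uncompute the ancilla with a second query). Consequently any bounded-error quantum algorithm solving the identity problem with~$q$ queries yields one solving the $1$-column problem with $O(q)$ queries, and since the latter needs $\Omega(n)$ queries, so does the former. The only points requiring care are the two I have flagged --- choosing the filler map~$\sigma$ so that no unintended identity column appears, and implementing $O_T$ reversibly from $O_X$ --- both of which are routine; alternatively one could bypass the reduction and apply Theorem~\ref{Amb02} directly with sets $A$ and $B$ mimicking the $1$-column construction, but the reduction is the shorter route.
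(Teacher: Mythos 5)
Your proof is correct and takes essentially the same approach as the paper: a reduction from the 1-column problem to the identity problem, encoding a potential identity column by writing $a$ where the matrix has a $1$ and a value different from $a$ where it has a $0$. The only cosmetic difference is your choice of filler (a fixed-point-free shift $\sigma(a)$ on the same set, rather than the paper's extra absorbing element $0$ adjoined to $S$), and your explicit treatment of the oracle simulation, which the paper leaves implicit.
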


\begin{proof} 
We reduce the 1-column problem to the identity problem.
Given a 0-1-matrix $M = \left( m_{i,j} \right)$ of order~$n$.
We define $S = \{0, 1, \dots, n\}$ and a multiplication table $T = \left( t_{i,j} \right)$ with $0 \leq i,j \leq n$ 
for~$S$ as follows:
\[t_{i,j} = 
\begin{cases}
0, & \text{if } m_{i,j} = 0,\\
i, & \text{if } m_{i,j} = 1,
\end{cases}
\]
and $t_{0, j} = t_{i,0}=0$.
Then $M$ has a 1-column iff $T$ has an identity element.
\end{proof}

Finding an identity element is simple. 
We choose an element $a\in S$ and then we test if 
$a$ is the identity element by using Grover search in $O(\sqrt{n})$ quantum queries. 
The success probability of this procedure is $\frac{1}{n}$. 
By using the amplitude amplification 
we get an $O(n)$ quantum query algorithm for finding an identity element (if there is one).
Since the upper and the lower bound match, we have determined the precise complexity of the identity problem.

\begin{corollary}
The identity problem has quantum query complexity $\Theta(n)$.
\end{corollary}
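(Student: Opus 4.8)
The plan is to combine a lower bound and an upper bound, both of order $n$. The lower bound $\Omega(n)$ is exactly the content of the preceding theorem on the identity problem, so the only thing left to establish is an $O(n)$ quantum algorithm; together they pin down the complexity at $\Theta(n)$.

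For the upper bound I would use a nested quantum search that mirrors the $\exists a\,\forall x$ structure of the problem, since an element $a$ is a (right) identity exactly when $x \circ a = x$ holds for every $x \in S$. At the inner level, for a fixed candidate $a$ I verify this property by a Grover search for a \emph{counterexample}, i.e.\ an $x$ with $x \circ a \neq x$; by Theorem~\ref{Grover} deciding whether such an $x$ exists costs $O(\sqrt n)$ queries, and $a$ is an identity precisely when the search finds nothing. At the outer level I search over the $n$ candidates $a \in S$. Treating ``choose $a$ uniformly at random and verify it'' as a single quantum subroutine $\mathcal A$, its success probability is at least $1/n$, because the identity, if it exists, is unique and is produced exactly when the correct $a$ is drawn.

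Amplitude amplification~\cite{BHMT00} applied with $\epsilon = 1/n$ then boosts this to a constant using $O(1/\sqrt\epsilon) = O(\sqrt n)$ calls to $\mathcal A$, each costing $O(\sqrt n)$ queries for the inner verification, for a total of $O(\sqrt n \cdot \sqrt n) = O(n)$ queries. I expect the main obstacle to be the clean composition of the two searches: the inner Grover test has two-sided error, so a non-identity candidate can occasionally slip through and must not mislead the outer amplification. I would address this by suppressing the inner error well below $1/n$ (via standard repetition, or a robust form of amplitude amplification) and then checking that the returned $a$ is indeed an identity, arguing that this does not inflate the bound beyond $O(n)$. Everything else is a routine assembly of the two search primitives together with the already-proved $\Omega(n)$ lower bound.
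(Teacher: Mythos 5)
Your proposal is correct and follows essentially the same route as the paper: the $\Omega(n)$ bound is quoted from the preceding theorem, and the $O(n)$ upper bound comes from guessing a candidate $a$, verifying it via a Grover search for a counterexample $x$ with $x\circ a\neq x$ in $O(\sqrt n)$ queries, and boosting the $1/n$ success probability with $O(\sqrt n)$ rounds of amplitude amplification. One caveat on your error-handling remark (a subtlety the paper silently ignores): suppressing the inner error below $1/n$ by standard repetition costs a $\Theta(\log n)$ factor and would only give $O(n\log n)$, so to preserve the clean $O(n)$ you must use the robust/recursive form of amplified search (or exploit that a found counterexample can be verified with one extra query, making the inner test's error one-sided).
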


In the {\em quasigroup problem\/} we have given a set~$S$ and a binary operation on~$S$
represented by  a table.
One has to decide whether~$S$ is a quasigroup,
that is, whether all equations $a \circ x = b$ and $x \circ a = b$ 
have unique solutions.
In the {\em loop problem\/}, one has to decide whether~$S$ is a loop.
A loop is a quasigroup with an identity element~$e$
such that $a \circ e  = a = e \circ a$ for all $a \in S$.

In the multiplication table of a quasigroup, every row and column  is a permutation of the elements of~$S$.
In a loop, there must occur  the identity permutation in some row and some column.
We have already seen how to determine an identity element with $O(n)$ quantum queries.
A row or column is a permutation,
if no element appears twice.
Therefore one can use the element distinctness quantum algorithm by Ambainis~\cite{Amb04}
to search for a row or column with two equal elements.
The quantum query complexity of the search is $O(\sqrt{n}\cdot n^{\frac 2 3}) = O(n^{\frac 7 6})$.
We show in the following theorem
an $\Omega(n)$ lower bound for these problems.

\begin{theorem}
The quasigroup problem and the loop problem require $\Omega(n)$ quantum queries.
\end{theorem}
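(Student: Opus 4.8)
The plan is to prove both lower bounds with a single construction and Ambainis' adversary method (Theorem~\ref{Amb02}), applied over the alphabet~$S$ rather than over bits. The yes-instances will be genuine quasigroups, i.e.\ Latin squares, and the no-instances will be tables that fail to be even quasigroups because a single cell has been corrupted; one corrupted cell already forces a repetition in its row, so such a table is neither a quasigroup nor a loop. This lets essentially one construction serve both problems.

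Concretely, for the quasigroup problem I would let $A$ be the set of all Latin squares on~$S$ (equivalently, all quasigroup tables) and let $B$ be the set of tables obtained from some Latin square by a single ``cyclic-neighbour flip'': for a cell $(i,j)$, replace its entry $L_{i,j}$ by the entry $L_{i,j'}$ sitting in the cyclically next column $j'$ of the same row~$i$. Such a flip forces a repeated value in row~$i$, so every table in $B$ is non-Latin and hence not a quasigroup, while every table in $A$ is. For the loop problem I would keep $B$ as above but take $A$ to be the set of loops, that is, Latin squares possessing a two-sided identity (for instance the cyclic table of $\mathbb{Z}_n$ and its relabellings); the same flips still land in~$B$, since destroying the Latin property certainly destroys loopness.

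For the adversary relation I connect $x\in A$ and $y\in B$ exactly when $y$ is obtained from $x$ by one of these flips. Each Latin square, respectively loop, has one flip per cell, giving $m=n^2$ neighbours in~$B$. Conversely, a table $y\in B$ has a unique row-duplicate, and the only single-cell change that repairs it into a Latin square is to restore the forced missing symbol of that row in the flipped cell, which returns the original $x\in A$; hence $m'=1$, and, crucially, both the flip and its repair are unique, so the per-coordinate multiplicities entering the method equal~$1$. The bound of Theorem~\ref{Amb02} then reads $\Omega(\sqrt{m\cdot m'})=\Omega(\sqrt{n^2\cdot 1})=\Omega(n)$ for both problems.

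The step I expect to be the real work is verifying the two uniqueness claims that pin the multiplicities to~$1$: first, that the cyclic-neighbour flip always produces a non-Latin table, so that $B$ genuinely consists of no-instances; and second, that from such a $y$ the only way to recover a Latin square by changing a single cell is to undo the flip, in particular that correcting the \emph{other} duplicate cell re-breaks some column and so does not yield a Latin square. Once these are checked, the counts $m=n^2$ and $m'=1$ are immediate, and the only mild extension beyond the statement of Theorem~\ref{Amb02} is that we run Ambainis' method over the alphabet~$S$ rather than over $\{0,1\}$, which is legitimate since the relevant flip at each sensitive cell has a single alternative value.
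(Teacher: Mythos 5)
Your proof is correct, but it takes a genuinely different route from the paper's. The paper never applies the adversary bound directly to quasigroup/loop tables: it first proves an $\Omega(n)$ bound for the Boolean \emph{identity matrix problem} (deciding whether a 0-1-matrix is the identity matrix, argued like the 1-column problem), and then gives an explicit gadget reduction mapping a 0-1-matrix $M$ to a table $T$ over $S=\{0,1,\dots,n-1\}$ such that $T$ is a loop iff $M$ is the identity matrix; the quasigroup case is only asserted to follow by ``similar arguments.'' You instead run the adversary method directly on the tables: $A$ consists of Latin squares (resp.\ loops), $B$ of tables with one cell overwritten by its cyclic row-neighbour, and your key uniqueness claim is sound --- a corrupted table has exactly one defective row \emph{and} one defective column (the flipped value now also repeats in its column), so any one-cell repair must sit at their intersection, where the missing symbol is forced; hence $m=n^2$, $m'=1$, with per-coordinate multiplicities $l=l'=1$, giving $\Omega(n)$ for both problems from a single relation. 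The one point to flag is the tool itself: Theorem~\ref{Amb02} as stated in the paper is for Boolean inputs with implicit bit-flip neighbours, whereas you need the relational version of Ambainis' bound over an arbitrary finite alphabet with multiplicity parameters $l,l'$. That version is available in the same reference \cite{Amb02} (it is what Ambainis uses, e.g., to lower-bound inverting a permutation), so this is a legitimate strengthening of the cited statement rather than a gap, but it should be invoked explicitly. In exchange for this stronger tool, your argument is more uniform and self-contained than the paper's: it avoids the delicate gadget and the somewhat informal case analysis of the paper's loop reduction, and it treats the quasigroup case with the same rigor as the loop case instead of leaving it as a sketch.
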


\begin{proof}
We reduce the {\em  identity matrix problem\/} to the loop problem.
Given a 0-1-matrix $M = \left( m_{i,j} \right)$ of order~$n$,
decide whether~$M$ is the identity matrix.
It is not hard to see that the  identity matrix problem requires $\Omega(n)$ quantum queries
(similar as for the 1-column problem).

We define $S = \{0,1, \dots, n-1\}$ and a multiplication table $T = \left( t_{i,j} \right)$ for~$S$.
For convenience, we take indices $0 \leq i,j \leq n-1$ for~$M$ and~$T$.
The entries of the second diagonal are
\[t_{i,n-1-i} = 
\begin{cases}
n-1, & \text{if } m_{i,i} = 1,\\
0, & \text{otherwise.}
\end{cases}
\]
For $j \not= n-1-i$ we define
\[t_{i,j} = 
\begin{cases}
(i + j )\bmod{n}, & \text{if } m_{i,n-1-j} = 0,\\
0, & \text{if } m_{i,n-1-j} = 1 \text{ and } (i+j ) \bmod{n} \not= 0,\\
1, & \text{otherwise.} 
\end{cases}
\]
If $M$ is the identity matrix,
then~$T$ is a circular permutation matrix
\[T = 
\begin{pmatrix}
0   & 1  &   \cdots  & n-2 & n-1\\
1 & 2   &  \cdots & n-1 &  0 \\
\vdots & \vdots &  & \vdots & \vdots  \\
n-1 & 0 &   \cdots & n-3 & n-2
\end{pmatrix}.\]
Hence $S$ is a loop with identity~$0$.

Suppose~$M$ is not the identity matrix.
If~$M$ has a~0 on the main diagonal,
say at position $(i,i)$,
then the value $n-1$ doesn't occur in row~$i$ in~$T$,
and hence, row~$i$ is not a permutation of~$S$.
If~$M$ has a~1 off the main diagonal, say at position  $(i, n-1 -j)$,
then there will be a 0 or 1 at position $(i,j)$ in $T$, which is different from $(i + j) \bmod{n}$.
Hence, either there will be two 0's or two 1's in row~$i$ in~$T$,
in which case row~$i$ is not a permutation of~$S$,
or a 0 or 1 changes to 1 or 0.
In the latter case, 
the $i$-th row of~$T$ can be a permutation only if correspondingly
the other 0 or 1 changes as well to 1 or 0, respectively.
But then this carries over to the other  rows and columns of~$T$.
That is,  there must be more 1's in~$M$ off the main diagonal,
so that all 0's and 1's in $T$ switch their place with respect to their position
when $M$ is the identity matrix.
However,
then there is no identity element in~$T$
and hence, $T$ is not a loop.

The reduction to the quasigroup problem can be done with similar arguments.
\end{proof}


\section*{Conclusion and Open Problems}
In this paper we present quantum query complexity bounds of algebraic properties.
We construct a quantum algorithm for the semigroup problem whose query complexity is 
$O(n^{5/4})$,
if the size of $M$ is constant. Then we consider the group problem, and presented a  
randomized algorithm that solves this problem with $O(n^{\frac{3}{2}})$ classical queries 
and $\widetilde{O}(n^{\frac{11}{14}})$ quantum queries to the multiplication table. 
Finally we show linear lower bounds for the semigroup, identity, quasigroup 
and loop problem.

Some questions remain open: Is there a quantum algorithm for the semigroup problem which is 
better then the Grover search bound of $O(n^{\frac 3 2})$ for $|M|\geq \frac{3}{8}$. It 
is not clear, whether we can apply the technique of the randomized  associative algorithm by 
Rajagopalan and Schulman~\cite{RS00} in connection with the quantum walk search schema of 
Magniez {\it et al.\/}~\cite{MNRS07}.

Some quantum query lower bound remain open. 
Are we able to prove a nontrivial lower bound for the group 
problem. Our upper bound for this problem is $\tilde{O}(n^{\frac{11}{14}})$. 
It would also be very interesting to close the gap between the $\Omega(n)$ lower bound
and the $O(n^{7/6})$ upper bound for the quasigroup and the loop problem.

   


\end{document}